\documentclass[12pt]{article}

\usepackage[margin=1in]{geometry}
\usepackage{setspace}
\usepackage{amsmath,amssymb,amsthm}
\usepackage[backend=biber,natbib=true,style=authoryear,uniquename=false]{biblatex}
\usepackage{hyperref}
\hypersetup{hidelinks}
\usepackage{graphicx}
\usepackage{booktabs}
\usepackage{caption}
\usepackage{float}

\newtheorem{proposition}{Proposition}
\newtheorem{remark}{Remark}

\usepackage{xcolor}
\usepackage{mdframed}
\usepackage[most]{tcolorbox}
\usepackage{alltt}
\usepackage{enumitem}

\title{Hedging the Singularity\thanks{email:andrew.y.chen@frb.gov. I thank Andrei Goncalves, Ben Knox, Stavros Panageas, Ivo Welch, and Jie Yang for helpful comments. The views expressed herein are those of the authors and do not necessarily reflect the position of the Board of Governors of the Federal Reserve or the Federal Reserve System.}}
\author{Andrew Y. Chen \\ Federal Reserve Board \\ \texttt{andrew.y.chen@frb.gov}}
\date{July 2026}

\begin{document}
\maketitle
\thispagestyle{plain}

\begin{abstract}
    \noindent AI stocks trade at extraordinary valuations. We develop an asset pricing model in which investors use AI stocks to hedge against an AI singularity that displaces their consumption. Because markets are incomplete---investors cannot trade private AI capital---AI stocks command a premium. Market incompleteness distorts both valuations and the efficient development of AI, creating a rationale for government transfers that becomes compelling when singularity-driven growth overwhelms deadweight costs. This paper was generated by AI, using \url{https://github.com/chenandrewy/ralph-wiggum-asset-pricing/}.
\end{abstract}

\section{Preface}

This preface is human-written. An agentic AI loop generated the rest of the paper.

\subsection{The Paper Generation Algorithm}

Inspired by \citet{Huntley2025}, AI agents repeatedly improve and test the paper until it passes:

\begin{mdframed}[backgroundcolor=gray!10,linewidth=0.6pt,linecolor=gray!70,roundcorner=2pt,innerleftmargin=0.6em,innerrightmargin=0.6em,innertopmargin=0.4\baselineskip,innerbottommargin=0.4\baselineskip]
    \begingroup
    \small\setstretch{1}
    \def\authorplanurl{https://github.com/chenandrewy/ralph-wiggum-asset-pricing/blob/main/ralph/author-plan.py}
    \def\paperspecurl{https://github.com/chenandrewy/ralph-wiggum-asset-pricing/blob/main/spec/paper-spec.md}
    \def\authorimproveurl{https://github.com/chenandrewy/ralph-wiggum-asset-pricing/blob/main/ralph/author-improve.py}
    \def\testagentsurl{https://github.com/chenandrewy/ralph-wiggum-asset-pricing/tree/ralph/run-final/tests}
    \textbf{Paper-generation loop}
    \begin{enumerate}[topsep=0.3\baselineskip,itemsep=0.5\baselineskip,parsep=0pt,leftmargin=1.6em]
        \item An \href{\authorplanurl}{\textcolor{blue}{author-plan}} agent generates an improvement plan based on previous test results, a \href{\paperspecurl}{\textcolor{blue}{paper specification}}, and other supporting docs.
        \item An \href{\authorimproveurl}{\textcolor{blue}{author-improve}} agent executes \texttt{improvement-plan.md}.
        \item \href{\testagentsurl}{\textcolor{blue}{Test agents}} evaluate the improved paper and code.
        \item If any test fails, go back to 1.
    \end{enumerate}
    \endgroup
\end{mdframed}

The agents used Claude Code with Opus 4.6 and had access to a file system, R, Python, and the internet. They produced the paper starting in Section \ref{sec:intro}.

The human's role consists of providing the \href{https://github.com/chenandrewy/ralph-wiggum-asset-pricing/blob/main/spec/paper-spec.md}{\textcolor{blue}{paper specification}} and \href{https://github.com/chenandrewy/ralph-wiggum-asset-pricing/tree/ralph/run-final/tests}{\textcolor{blue}{test suite}}. Designing these inputs was iterative. I would run the algorithm, review the output, and then update the tests and paper specification.

My goal was an algorithm that reliably generates a single, high-quality paper, entirely unattended. This goal differs from previous work on AI-generated research, which aimed to produce large quantities of papers \citep[e.g.][]{NovyMarxVelikov2025,LuEtAl2024}.

\subsection{Where the Algorithm Succeeded}

From the paper specification, the algorithm constructs a mostly-complete paper in roughly five iterations. At this point, the paper would be largely free of errors. The remaining failing tests were judgment calls, like \href{https://github.com/chenandrewy/ralph-wiggum-asset-pricing/blob/ralph/run-final/tests/visual-figures-image-only.py}{\textcolor{blue}{quality figure formatting}} and the \href{https://github.com/chenandrewy/ralph-wiggum-asset-pricing/blob/ralph/run-final/tests/writing-intro.py}{\textcolor{blue}{tightness of the introduction writing}}. Appendix \ref{appendix:iter-log} details the failing tests.

This success is perhaps remarkable. From a \href{https://github.com/chenandrewy/ralph-wiggum-asset-pricing/blob/main/spec/paper-spec.md}{\textcolor{blue}{1.5-page outline}}, the AIs can formalize the theory, prove propositions, download data, provide empirical illustrations, write up intuition, cite literature, and produce a mostly-complete paper.

\subsection{Where the Algorithm Struggled}

The algorithm struggles with judgment calls. Even with 25 tests, which consumed two \$200-per-month Claude subscriptions, there were quality issues in the writing and presentation. Additionally, the output quality varied across runs in random ways. Finally, the tests would fight each other, leading to oscillating failures.

Due to these problems, I generated the final paper with a hybrid approach: I created five independent drafts based on the paper specification and selected the best one. I ran the loop until I was satisfied, even though not all tests passed. Finally, I edited a handful of errors by hand. This full process is documented in the final run's \href{https://github.com/chenandrewy/ralph-wiggum-asset-pricing/commits/ralph/run-final/}{\textcolor{blue}{commit history}}. Appendix \ref{appendix:iter-log} provides a brief summary of the iterations.

Importantly, the algorithm failed to respond effectively to a high-quality referee report without detailed guidance. While the test agents could generate reasonable suggestions based on the report, the author agents struggled to convert these suggestions into a coherent revision. In the end, I ``hard-coded'' the response to the referee into the \href{https://github.com/chenandrewy/ralph-wiggum-asset-pricing/blob/main/spec/paper-spec.md}{\textcolor{blue}{paper specification}} by specifying the model extension in outline form.

Other issues included poor literature reviews. The agents failed to cite \citet{KorinekSuh2024}, \citet{Zhang2019}, \citet{BabinaEtAl2024}, and \citet{AndrewsFarboodi2025}. The agents also made minor mathematical errors related to the careful handling of exotic economic modeling (consumption in extinction is more cleanly modeled as undefined, not zero).

The algorithm also struggled with flaws I introduced into the paper specification. \emph{As written, Proposition \ref{prop:veto} is incorrect} because I should have asked for extinction risk to be removed from the extensions for simplicity (the proposition holds if extinction risk is excluded). The adversarial test agents should have flagged this, but they did not.

\subsection{Comparison with the First Draft (April 2025)}

The first draft was generated with a \href{https://github.com/chenandrewy/Prompts-to-Paper/}{\textcolor{blue}{sequence of LLM chat prompts}}. The prompts instruct the LLM to draft a model description (based on detailed functional forms and motivations), derive specific theoretical results, and eventually write the introduction. It was a highly impractical way to work on a research paper.

In contrast, the current algorithm is a very reasonable way to work on research. One can clone the repository, update the \href{https://github.com/chenandrewy/ralph-wiggum-asset-pricing/blob/main/spec/paper-spec.md}{\textcolor{blue}{paper specification}}, select the desired tests, and generate a largely correct first draft in a few hours.

What has not changed is that the AI's economics writing is still below what one expects from top journals. The previous draft complained about how AI fails to write carefully around unmodeled economic channels, and this problem persists in the current paper.

\subsection{Implications for Research}

As of this writing, AI is like a cognitive power tool or a piece of cognitive heavy equipment. It can plow forward on just about any knowledge task. But it is difficult to wield with subtlety or artistry.

The artisanal elements of research, like converting referee comments into coherent and compelling revisions, are still best done with humans in the loop. The frontier models still lack the judgment needed to trade off competing ideas with their various strengths and pitfalls and find a clean economic story that makes them all work together.

These implications echo those in the software industry, in which ``taste'' is becoming more and more important. As Cat Wu, Claude Code's Head of Product, observes, ``[a]s code becomes much cheaper to write, the thing that becomes more valuable is deciding what to write'' (\citealp{Wu2026}; see also \citealp{DavisEtAl2026}; \citealp{AnthropicInstitute2026}).

\section{Introduction} \label{sec:intro} 

The publicly traded stocks most exposed to artificial intelligence have reached remarkable valuations. Figure~\ref{fig:ai-valuations} illustrates: the S\&P~500's price-dividend ratio has reached historically elevated levels, and the AI- and technology-heavy NASDAQ Composite has sharply outpaced the broader market since 2015, with the valuation gap widening since 2023 as advances in generative AI have accelerated expectations of transformative productivity gains---and, implicitly, of the displacement such gains may bring.

\begin{figure}[H]
    \centering
    \caption{Valuation Ratios and the AI Premium. Panel~(a) shows the S\&P~500 trailing price-dividend ratio from the Shiller dataset, which has reached historically elevated levels. Panel~(b) shows the NASDAQ Composite price relative to the S\&P~500, normalized to January~2015~=~100; the NASDAQ is heavily weighted toward AI and technology firms, so a rising ratio indicates growing relative valuations. Sources: NASDAQ from FRED; S\&P~500 from the Shiller dataset.}
    \label{fig:ai-valuations} 
    \includegraphics[width=\textwidth]{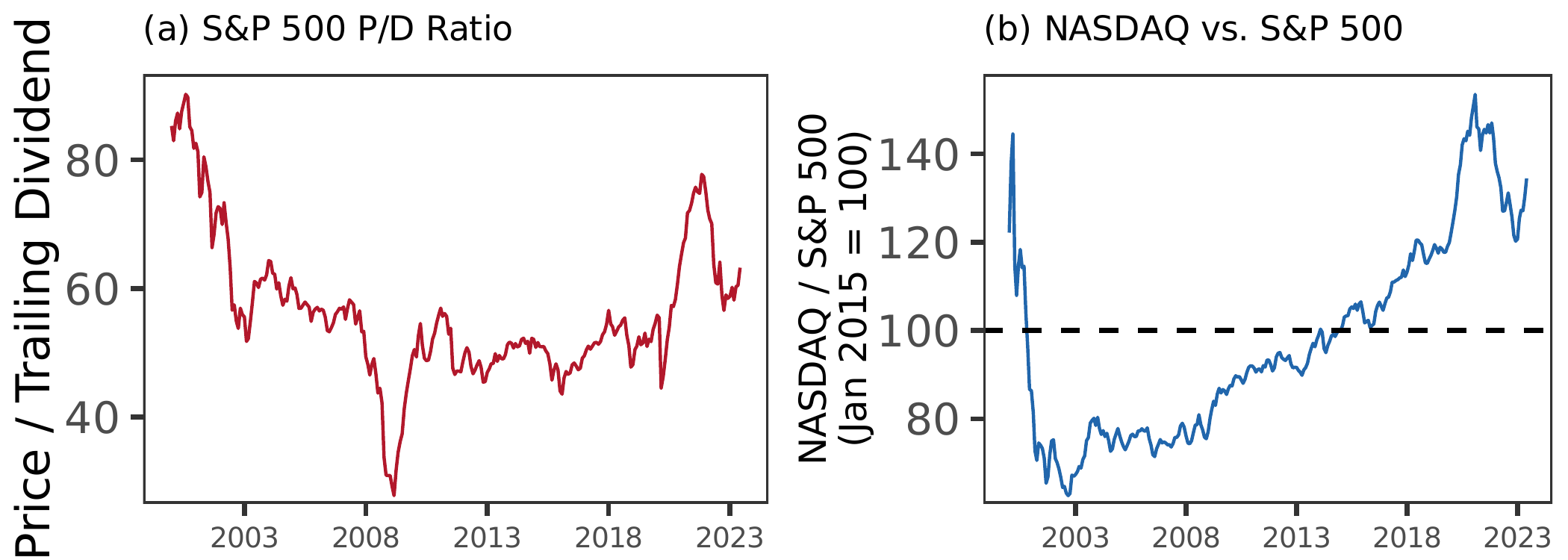}
\end{figure}

Part of this premium, we argue, reflects a hedging motive: investors use AI stocks to partially insure against displacement from AI. We define a \emph{negative} AI singularity as a sudden, dramatic improvement in AI productivity that displaces the typical investor's labor income and consumption. Much of the relevant AI equity is restricted---held by founders, early-stage investors, and firms that may not yet exist. As \citet{GKP2012} emphasize, the displacing capital often belongs to future innovators who have not yet entered the economy, so investors cannot trade it. Because investors cannot trade the restricted equity, they turn to publicly traded AI stocks as the only available partial hedge against displacement, pushing valuations above those of non-AI stocks---a premium that would largely vanish if markets were complete.

To formalize this mechanism, we build on \citet{GKP2012}'s framework by modeling a discrete AI singularity with severe displacement. A representative household---the marginal investor---faces a stochastic singularity that shifts consumption toward AI owners holding private capital. AI stocks grow as a share of the economy with each singularity, making them a partial hedge. The model delivers closed-form price-dividend ratios, and quantitative analysis shows that at a singularity probability of one percent, P/D ratios for AI stocks roughly double relative to non-AI stocks, consistent with observed valuation spreads. Extinction risk \citep{Jones2024} partially offsets this premium: the states in which AI is powerful enough to produce enormous growth are also those in which existential risk is highest, narrowing the valuation spread (Proposition~\ref{prop:comp-statics}). But while extinction risk attenuates the pricing distortion, market incompleteness itself has consequences beyond asset prices.

The same incompleteness that inflates AI valuations also distorts real decisions---it can distort the development of AI itself. When the positive singularity is more likely than the negative one, AI development is socially efficient, yet a risk-averse household that cannot hedge displacement may rationally choose to block it---the uninsurable downside looms larger than the expected upside (Proposition~\ref{prop:veto}). Calls to slow or halt AI development may partly reflect investors' inability to hedge the downside.

Financial approaches to AI disaster risk are strikingly under-discussed relative to regulatory and alignment-focused approaches, and frictions severely limit the solutions that do exist. The natural fix---broader trading of AI equity---is blocked by restricted ownership and non-existent future capital. Government transfers can substitute for missing markets, but standard fiscal tools are limited by deadweight costs that scale with the size of transfers, making them ineffective in ordinary settings.

The singularity setting, however, offers a distinctive resolution. If the singularity occurs, these frictions can be overcome due to the sheer abundance of resources. If a singularity produces the kind of explosive output growth modeled by \citet{Jones2024}, even grossly inefficient transfers become effective, because the resource base overwhelms the deadweight costs. The same explosive growth that drives the incomplete-markets problem also provides the means to overcome it through redistribution.

We develop these ideas as follows. Section~\ref{sec:model} presents the model and derives equilibrium prices, Section~\ref{sec:quant} provides quantitative analysis, and Section~\ref{sec:extensions} develops extensions on AI development distortions and government transfers.\footnote{This paper is itself a product of the displacement it models. The analysis, code, and prose were produced by AI agents. See \url{https://github.com/chenandrewy/ralph-wiggum-asset-pricing/}}

\medskip
\noindent\textbf{Related literature.}
Our paper builds most directly on \citet{GKP2012}, who model how innovation displaces existing agents and creates a systematic risk factor under incomplete markets. The main insights about displacement risk and incomplete markets originate in their work; we connect these ideas to the AI singularity and examine how explosive growth interacts with government transfers.

\citet{Jones2024} studies the trade-off between AI-driven growth and existential risk. We incorporate his extinction channel and show it attenuates rather than amplifies the valuation premium. Our work also relates to creative destruction and displacement risk premia \citep{KoganPapanikolaou2014,KoganPapanikolaouStoffman2020,Knesl2023}, the macroeconomics of AI growth \citep{AghionJonesJones2019,Acemoglu2025}, the rare disasters literature \citep{Barro2006,Wachter2013}, and \citet{PastorVeronesi2009}'s analysis of technological revolutions.

\section{Model} \label{sec:model} 

\subsection{Setup} 

Time is discrete and infinite, $t = 0, 1, 2, \ldots$ A representative household is the marginal investor in public stock markets. There is also a group of AI owners who hold private AI capital and are not marginal investors in public stocks. The AI owners can also be thought of as future capital owners who do not yet participate in markets, as in \citet{GKP2012}. Importantly, we do not explicitly model the entry of new cohorts of firms or workers; AI owners are a static group whose share changes only through the singularity mechanism.

\paragraph{Consumption.}
Aggregate consumption at date $t$ is $C_t$. In the absence of a singularity, aggregate consumption grows at a constant rate $g > 0$:
\begin{equation}
    C_{t+1} = (1 + g) \, C_t. \label{eq:agg-consumption-growth}
\end{equation}

The household receives a share $\alpha_t \in (0, 1)$ of aggregate consumption, so its consumption is $c_t^H = \alpha_t \, C_t$. AI owners receive the remainder, $(1 - \alpha_t) \, C_t$.

\paragraph{Singularity.}
Each period, with probability $p$, an AI singularity occurs. With probability $1 - p$, no singularity occurs and $\alpha_t$ is unchanged. Conditional on a singularity:
\begin{enumerate}
    \item With probability $1 - \xi$, the singularity is \emph{non-extinction}. Aggregate consumption jumps by a factor $1 + \eta$ (where $\eta > 0$ captures the productivity boost), but the household's share drops:
          \begin{equation}
              \alpha_{t+1} = \phi \, \alpha_t, \qquad \phi \in (0,1). \label{eq:displacement}
          \end{equation}
          The parameter $\phi$ governs displacement severity: smaller $\phi$ means larger displacement.
    \item With probability $\xi$, the singularity triggers \emph{extinction}: $C_{t+1} = 0$ for all subsequent dates. This follows \citet{Jones2024}, who emphasizes that the states in which AI is powerful enough to produce enormous growth are precisely those in which existential risk is highest.
\end{enumerate}

\noindent After a non-extinction singularity, the economy continues with the new, lower household share. Singularities can occur repeatedly, progressively displacing the household.

\paragraph{Assets.}
Two public assets are available for trading:
\begin{itemize}
    \item \textbf{AI stocks} pay dividends $D_t^{AI} = \theta_t \, C_t$, where $\theta_t$ is the AI dividend share. Upon a non-extinction singularity, the AI share expands:
          \begin{equation}
              \theta_{t+1} = \theta_t + \Delta\theta \, (1 - \theta_t), \qquad \Delta\theta \in (0,1). \label{eq:theta-update}
          \end{equation}
          This is the expression that drives $\Gamma^{AI} \neq \Gamma^{N}$ in equilibrium: AI stocks' dividends grow faster than the economy upon a singularity, while non-AI stocks shrink as a share. Absent a singularity, $\theta_t$ is unchanged.
    \item \textbf{Non-AI stocks} pay dividends $D_t^{N} = (1 - \theta_t) \, C_t$.
\end{itemize}

\noindent Total public dividends equal aggregate consumption ($D^{AI} + D^{N} = C_t$), but these dividends are distributed among all investors, not solely the household. The household's consumption $\alpha_t C_t$ reflects the net outcome of its portfolio returns from public stocks and any labor-like income; AI owners receive the remainder through their own public stockholdings and their restricted equity. Because $\alpha_t$ and $\theta_t$ are distinct---one governs the consumption split, the other the dividend split among publicly traded assets---the household's share can differ from its pro-rata claim on public dividends.

AI owners also hold restricted equity---founder stakes, pre-IPO shares, and other claims on AI firms that are not available for public trading. The household \emph{cannot} purchase these restricted shares. This is the source of market incompleteness: although the household can trade publicly listed AI stocks, it cannot access the full universe of AI equity, and therefore cannot fully hedge displacement risk.

The displacement parameter $\phi$ governs the household's total consumption share, which reflects both tradeable and non-tradeable components (labor income, private wealth, and returns from public stockholdings). Holding AI stocks allows the household to smooth marginal utility across states---this is the hedging channel that inflates AI stock valuations---but does not eliminate displacement, because the non-tradeable component (primarily labor income) is what $\phi$ captures. If the household could trade the restricted equity, it could hedge this component directly, and the valuation spread would collapse.

\paragraph{Preferences.}
The household has CRRA preferences with risk aversion $\gamma > 1$ and discount factor $\beta \in (0,1)$:
\begin{equation}
    U_0^H = \mathbb{E}_0 \left[ \sum_{t=0}^{\infty} \beta^t \, \frac{(c_t^H)^{1-\gamma}}{1 - \gamma} \right]. \label{eq:utility}
\end{equation}

\subsection{Equilibrium prices} 

The household prices all publicly traded assets via its Euler equation. Because markets are incomplete, the household's stochastic discount factor (SDF) reflects its own consumption growth, not aggregate consumption growth.

\begin{proposition}[Price-dividend ratios] \label{prop:pd-ratios} 
    In the stationary equilibrium where the household holds both public assets, the price-dividend ratios are:
    \begin{equation}
        \frac{P^{AI}}{D^{AI}} = \frac{\beta (1+g)^{1-\gamma} \left[ (1-p) + p(1-\xi)(1+\eta)^{-\gamma} \phi^{-\gamma} \, \Gamma^{AI} \right]}{1 - \beta(1+g)^{1-\gamma} \left[ (1-p) + p(1-\xi)(1+\eta)^{-\gamma} \phi^{-\gamma} \, \Gamma^{AI} \right]}, \label{eq:pd-ai}
    \end{equation}
    \begin{equation}
        \frac{P^{N}}{D^{N}} = \frac{\beta (1+g)^{1-\gamma} \left[ (1-p) + p(1-\xi)(1+\eta)^{-\gamma} \phi^{-\gamma} \, \Gamma^{N} \right]}{1 - \beta(1+g)^{1-\gamma} \left[ (1-p) + p(1-\xi)(1+\eta)^{-\gamma} \phi^{-\gamma} \, \Gamma^{N} \right]}, \label{eq:pd-nonai}
    \end{equation}
    where $\Gamma^{AI} = \frac{\theta + \Delta\theta(1-\theta)}{\theta} (1+\eta)$ and $\Gamma^{N} = \frac{1 - \theta - \Delta\theta(1-\theta)}{1-\theta} (1+\eta)$ are the dividend growth factors for AI and non-AI stocks conditional on a non-extinction singularity.
\end{proposition}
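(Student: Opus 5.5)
The plan is to guess and verify a constant price-dividend ratio using the household's Euler equation. Because markets are incomplete, the relevant stochastic discount factor is the household's own intertemporal marginal rate of substitution,
\[
M_{t+1} = \beta \left( \frac{c^H_{t+1}}{c^H_t} \right)^{-\gamma} = \beta \left( \frac{\alpha_{t+1} C_{t+1}}{\alpha_t C_t} \right)^{-\gamma}.
\]
For each public asset $i \in \{AI, N\}$, I will impose $P^i_t = E_t[M_{t+1}(P^i_{t+1} + D^i_{t+1})]$. I then conjecture $P^i_t = v^i D^i_t$ with $v^i$ constant. This reduces the Euler equation to
\[
v^i = (1+v^i)\, K^i, \qquad K^i \equiv E_t\!\left[M_{t+1} \frac{D^i_{t+1}}{D^i_t}\right],
\]
so that $v^i = K^i/(1-K^i)$, provided $K^i<1$. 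The whole proof therefore comes down to computing $K^i$ state by state and checking that it matches the bracketed expressions in \eqref{eq:pd-ai}--\eqref{eq:pd-nonai}.

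The expectation splits into three branches.
\begin{itemize}
\item \emph{No singularity} (probability $1-p$). Here $\alpha$ and $\theta$ are unchanged, and both consumption and dividends grow by $1+g$. The branch contributes $(1-p)\beta(1+g)^{-\gamma}(1+g) = (1-p)\beta(1+g)^{1-\gamma}$.
\item \emph{Non-extinction singularity} (probability $p(1-\xi)$). Aggregate consumption grows by $(1+g)(1+\eta)$, reading the jump as multiplying normal growth, and the household share falls by the factor $\phi$. The SDF is therefore $\beta(1+g)^{-\gamma}(1+\eta)^{-\gamma}\phi^{-\gamma}$. By \eqref{eq:theta-update}, the AI dividend grows by $(1+g)\frac{\theta+\Delta\theta(1-\theta)}{\theta}(1+\eta) = (1+g)\Gamma^{AI}$. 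The non-AI dividend share becomes $(1-\theta)(1-\Delta\theta)$, giving growth $(1+g)\Gamma^N$.
\item \emph{Extinction} (probability $p\xi$). All future dividends and prices are zero, so the payoff $P^i_{t+1}+D^i_{t+1}$ vanishes. I will adopt the convention that this branch contributes zero to the Euler equation.
\end{itemize}
Factoring out $\beta(1+g)^{1-\gamma}$ gives exactly the bracket in the proposition, and substituting into $v^i = K^i/(1-K^i)$ yields the stated formulas.

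Two points need care, and I expect them to be the main obstacle.

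The first is stationarity. $\Gamma^{AI}$ and $\Gamma^N$ depend on $\theta_t$, which itself jumps at each singularity. Consequently $K^i$ is not literally constant along the path, and the true price-dividend ratio is a function $v^i(\theta)$ satisfying a functional recursion. I will interpret the proposition's ``stationary equilibrium'' as holding the singularity-conditional dividend growth factor fixed at its current-$\theta$ value. Under that interpretation the constant-$v$ conjecture is self-consistent. I will flag explicitly that this is the assumption under which the closed form is exact; otherwise it is a local (current-state) approximation.

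The second is the extinction branch. Strictly, $c^H_{t+1}=0$ makes $M_{t+1}$ infinite, so the product $M_{t+1}\cdot 0$ is not well defined. Assigning it zero is a modeling convention (one could treat consumption as undefined in that state) and must be stated rather than derived.

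Finally, I will verify that the conjectured solution is the right one among candidate equilibria. Under the parameter restriction $K^i<1$, iterating the Euler equation forward gives
\[
P^i_t = D^i_t \sum_{j\ge1} (K^i)^j,
\]
which converges. The transversality condition $\lim_{T\to\infty} E_t[M_{t,t+T}P^i_{t+T}]=0$ then rules out bubble components, so the constant-ratio solution is the unique non-bubble equilibrium price. Throughout, the household is taken to be the marginal investor holding both assets, as the proposition specifies, so both Euler equations bind with equality.
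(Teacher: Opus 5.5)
Your proposal is correct and follows essentially the same route as the paper's proof in Appendix~\ref{app:proof-pd}. Like the paper, you guess a constant P/D ratio, split the Euler equation into the no-singularity, non-extinction, and extinction branches, and handle the $\theta$-dependence of $\Gamma^{AI}$ by setting the post-singularity ratio equal to the pre-singularity one; this is exactly the paper's approximation, which becomes exact as $\Delta\theta \to 0$.

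One small correction: $\Gamma^{N} = (1-\Delta\theta)(1+\eta)$ does not depend on $\theta$, as your own share calculation $(1-\theta)(1-\Delta\theta)$ shows, so the non-AI formula is exact and your stationarity caveat applies only to the AI stock.
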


\begin{proof}
    See Appendix~\ref{app:proof-pd}.
\end{proof}

\begin{remark}[Existence condition] \label{rem:existence} 
    The P/D ratios in Proposition~\ref{prop:pd-ratios} are well-defined (positive and finite) if and only if the numerator
    \begin{equation}
        A^j \;\equiv\; \beta(1+g)^{1-\gamma}\left[(1-p) + p(1-\xi)(1+\eta)^{-\gamma}\phi^{-\gamma}\,\Gamma^j\right] < 1. \label{eq:existence}
    \end{equation}
    When $A^j \geq 1$, the SDF-weighted expected dividend growth exceeds the discount rate and the geometric pricing sum diverges---the asset's price is infinite. Intuitively, the hedging value of the asset is so extreme that no finite price can clear the market. The baseline calibration satisfies $A^j < 1$ for both assets. As we discuss in Section~\ref{sec:ext2}, sufficiently severe displacement can violate this condition, which itself motivates the role of government transfers.
\end{remark}

The closed-form expressions in Proposition~\ref{prop:pd-ratios} rely on an approximation: the post-singularity P/D ratio is treated as equal to the pre-singularity ratio. This is exact when $\Delta\theta \to 0$ and becomes less accurate as $\Delta\theta$ grows, since each singularity shifts $\theta$ and therefore shifts the dividend growth factors. The closed-form is useful for building intuition about the hedging channel, particularly through the comparison of $\Gamma^{AI}$ and $\Gamma^{N}$. Table~\ref{tab:pd-ratios} reports numerically exact P/D ratios, computed by iterating the Euler equation over the chain of post-singularity states (see Appendix~\ref{app:proof-pd}). The full derivation of the closed form is also in Appendix~\ref{app:proof-pd}.

The key economic content of Proposition~\ref{prop:pd-ratios} is in the comparison of $\Gamma^{AI}$ and $\Gamma^{N}$. Since $\Delta\theta > 0$, AI stocks' dividends grow faster than aggregate consumption upon a singularity ($\Gamma^{AI} > 1+\eta$), while non-AI stocks' dividends grow more slowly ($\Gamma^{N} < 1+\eta$), shrinking as a share of the economy. When $\phi(1+\eta) < 1$---that is, when displacement is severe enough that the household's consumption falls despite the aggregate productivity gain---the household's marginal utility in singularity states is high, and AI stocks' payoffs are especially valuable, pushing their P/D ratio above that of non-AI stocks. This is the hedging channel: AI stocks pay off precisely when the household's consumption falls, making them a partial hedge against displacement.

It is immediate from the P/D expressions that the valuation spread widens with displacement severity (decreasing $\phi$, which raises marginal utility in singularity states via $\phi^{-\gamma}$) and with singularity probability $p$ for sufficiently risk-averse households (higher $p$ puts more weight on the states where AI stocks' dividend advantage operates). The interaction with extinction risk is less obvious:

\begin{proposition}[Extinction attenuation] \label{prop:comp-statics} 
    The valuation spread is decreasing in extinction probability $\xi$.
\end{proposition}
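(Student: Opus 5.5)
Define the valuation spread as the difference of the two ratios in Proposition~\ref{prop:pd-ratios}, $S(\xi) \equiv P^{AI}/D^{AI} - P^{N}/D^{N}$. Both ratios have the form $f(A^j) = A^j/(1-A^j)$, with $A^j$ the existence quantity in \eqref{eq:existence}. Write $K \equiv \beta(1+g)^{1-\gamma}$ and $m \equiv p(1+\eta)^{-\gamma}\phi^{-\gamma}$. Then
\begin{equation*}
A^j(\xi) = K\left[(1-p) + m(1-\xi)\Gamma^j\right], \qquad j \in \{AI, N\}.
\end{equation*}
Neither $\Gamma^{AI}$ nor $\Gamma^N$ depends on $\xi$. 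So $\xi$ enters only through the common factor $(1-\xi)$ multiplying the singularity term. I will work throughout on the parameter region where Remark~\ref{rem:existence} holds, $A^{AI} < 1$ (and hence $A^N < 1$).

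First I establish the ordering. Since $\Delta\theta \in (0,1)$, we have $\Gamma^{AI} > 1+\eta > \Gamma^N > 0$, and therefore $A^{AI} > A^N$ for every $\xi < 1$. At $\xi = 1$ the two coincide, $A^{AI} = A^N = K(1-p)$, and the spread is zero. This is a useful sanity check: the spread is positive for $\xi < 1$ and vanishes at $\xi = 1$, which is consistent with it decreasing.

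Second I differentiate. We have $f'(A) = (1-A)^{-2}$ and $\partial A^j/\partial\xi = -Km\Gamma^j$, so
\begin{equation*}
\frac{dS}{d\xi} = -Km\left[\frac{\Gamma^{AI}}{(1-A^{AI})^2} - \frac{\Gamma^{N}}{(1-A^{N})^2}\right].
\end{equation*}
I need the bracket to be positive. This holds because both factors are ordered the same way: $\Gamma^{AI} > \Gamma^N > 0$, and $0 < 1-A^{AI} < 1-A^N$, so $(1-A^{AI})^{-2} > (1-A^N)^{-2}$. The product of the larger positive terms exceeds the product of the smaller ones, so $dS/d\xi < 0$ strictly. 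As a cleaner alternative, I can write $S = \int_{A^N}^{A^{AI}} (1-a)^{-2}\,da$. Raising $\xi$ shrinks the interval $[A^N, A^{AI}]$, whose length $Km(1-\xi)(\Gamma^{AI} - \Gamma^N)$ falls in $\xi$. It also shifts the interval toward lower $a$, where the integrand is smaller. Both effects reduce $S$.

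The main obstacle is pinning down what ``valuation spread'' means and checking the claim is robust to that choice. Under the difference definition the argument above is complete. If the paper instead means the ratio $(P^{AI}/D^{AI})/(P^N/D^N)$, I would show $\log f(A^{AI}) - \log f(A^N)$ is decreasing in $\xi$. Here $(\log f)'(A) = 1/[A(1-A)]$ is not monotone, so the clean same-ordering argument fails. My plan in that case is the integral form $\int_{A^N}^{A^{AI}} da/[a(1-a)]$, combined with the identity $A^j = K(1-p) + (1-\xi)Km\Gamma^j$. The hope is to show that shrinking the common scaling $(1-\xi)$ lowers the log-ratio, possibly under the calibration restriction that $A$ stays in the region where $a(1-a)$ is increasing. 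I would also note that the result concerns only the closed-form approximation of Proposition~\ref{prop:pd-ratios}, not the numerically exact ratios.
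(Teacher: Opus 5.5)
Your argument is correct for the difference definition, and it is more careful than the paper's. The paper's proof is a single verbal step: $\xi$ enters each P/D ratio only through the factor $(1-\xi)$ on the singularity term, which is the only place $\Gamma^{AI}$ and $\Gamma^{N}$ differ, so raising $\xi$ ``compresses'' the spread. Read literally, that shows only that the gap $A^{AI}-A^{N}=(1-\xi)Km(\Gamma^{AI}-\Gamma^{N})$ shrinks. Because $A\mapsto A/(1-A)$ is convex, a shrinking gap alone would not exclude a larger spread if the interval $[A^{N},A^{AI}]$ moved toward $1$. Your same-ordering argument supplies exactly that missing step, and so does your integral form, where the interval both shrinks and shifts down. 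Two of your caveats are also apt and not made explicit in the paper. First, the existence region is upward-closed in $\xi$, since $A^{AI}$ falls in $\xi$. Second, the result concerns the closed form, not the numerically exact AI ratios in Table~\ref{tab:pd-ratios}; the paper's argument is likewise confined to the closed form.

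Your treatment of the ratio reading is the weak part. It matters, because the paper illustrates the proposition with the ratio column of Table~\ref{tab:pd-ratios}. You leave it as a hope and suggest a calibration restriction may be needed, but none is. Write $c=K(1-p)>0$ and $x=(1-\xi)Km$, so $A^{j}=c+x\Gamma^{j}$, and factor
\[
\frac{f(A^{AI})}{f(A^{N})}=\frac{c+x\Gamma^{AI}}{c+x\Gamma^{N}}\cdot\frac{1-c-x\Gamma^{N}}{1-c-x\Gamma^{AI}}.
\]
The $x$-derivative of the first factor has the sign of $c(\Gamma^{AI}-\Gamma^{N})>0$. The $x$-derivative of the second has the sign of $(1-c)(\Gamma^{AI}-\Gamma^{N})>0$, using $c\le A^{N}<A^{AI}<1$. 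Both factors are positive on the existence region, so the ratio is increasing in $x$ and hence strictly decreasing in $\xi$.

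Equivalently, your same-ordering argument does work for $(\log f)'$ once you regroup $\Gamma^{j}/[A^{j}(1-A^{j})]$ as $(\Gamma^{j}/A^{j})\cdot(1-A^{j})^{-1}$. The map $\Gamma\mapsto\Gamma/(c+x\Gamma)$ is increasing because $c>0$, so both factors are again ordered the same way. With this addition, your proof covers both readings of ``valuation spread.''
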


\begin{proof}
    Higher $\xi$ reduces the weight on non-extinction singularity states via the factor $(1-\xi)$ in each P/D ratio. Since these are the only states where AI stocks' dividend advantage ($\Gamma^{AI} > \Gamma^{N}$) operates, higher extinction probability compresses the valuation spread.
\end{proof}

\subsection{Discussion} 

The model's mechanism parallels \citet{GKP2012}, who show that growth stocks earn lower expected returns because they hedge displacement risk from innovation. In their framework, displacement is driven by new cohorts of firms entering the economy; in ours, it is driven by a discrete AI singularity. The key difference is the nature of the displacement event: GKP model continuous displacement from expanding technological variety, while we model a sudden, severe shift in which the household's consumption falls even as aggregate output rises. This makes the interaction with extinction risk \citep{Jones2024} natural and generates sharper quantitative predictions about how singularity probabilities map to valuation ratios.

The market incompleteness in our model---the household's inability to trade restricted AI equity---is central. If the household could trade the restricted equity, its effective displacement parameter would become $\phi_\text{eff} \to 1$: displacement is fully hedged, the SDF no longer overweights singularity states, and the displacement-driven valuation premium largely collapses---though a small residual spread from differential dividend growth ($\Gamma^{AI} \neq \Gamma^{N}$) remains. This echoes \citet{GKP2012}'s point that future innovators' rents cannot be traded because the innovators have not yet entered the economy. Our AI owners play an analogous role, though we do not model the entry dynamics that are central to their framework; the displacement in our model comes from the singularity's reallocation of consumption shares rather than from creative destruction by new entrants.

One feature of the discrete singularity that has no analog in \citet{GKP2012}'s continuous-displacement framework is that sufficiently severe displacement can violate the existence condition for finite prices (Remark~\ref{rem:existence}). When displacement is extreme, the household's marginal utility in singularity states becomes so large that the SDF-weighted expected dividend growth exceeds the discount rate, and no finite price can clear the market. This discontinuity---from finite to infinite hedging demand---cannot arise under GKP's gradual displacement, where the pricing kernel remains well-behaved. It reflects the distinctive severity of a discrete AI singularity and, as we show in the extensions, motivates the role of government transfers.

\section{Quantitative Analysis} \label{sec:quant} 

\begin{table}[t]
    \centering
    \caption{Price-Dividend Ratios: AI Stocks vs.\ Non-AI Stocks}
    \label{tab:pd-ratios} 
    \begin{tabular}{cc ccc}
\toprule
 & & \multicolumn{3}{c}{Price-Dividend Ratio} \\
\cmidrule(lr){3-5}
$p$ & $\xi$ & AI Stocks & Non-AI Stocks & Ratio \\
\midrule
0.1\% & 0\% & 10.4 & 9.8 & 1.1 \\
0.1\% & 5\% & 10.4 & 9.8 & 1.1 \\
0.1\% & 10\% & 10.3 & 9.7 & 1.1 \\
0.1\% & 20\% & 10.2 & 9.7 & 1.1 \\
\midrule
0.2\% & 0\% & 11.5 & 10.1 & 1.1 \\
0.2\% & 5\% & 11.4 & 10.0 & 1.1 \\
0.2\% & 10\% & 11.3 & 10.0 & 1.1 \\
0.2\% & 20\% & 11.0 & 9.9 & 1.1 \\
\midrule
0.5\% & 0\% & 15.5 & 11.1 & 1.4 \\
0.5\% & 5\% & 15.0 & 11.0 & 1.4 \\
0.5\% & 10\% & 14.6 & 10.8 & 1.3 \\
0.5\% & 20\% & 13.8 & 10.6 & 1.3 \\
\midrule
0.8\% & 0\% & 21.2 & 12.3 & 1.7 \\
0.8\% & 5\% & 20.2 & 12.1 & 1.7 \\
0.8\% & 10\% & 19.2 & 11.8 & 1.6 \\
0.8\% & 20\% & 17.4 & 11.4 & 1.5 \\
\midrule
1.0\% & 0\% & 26.5 & 13.3 & 2.0 \\
1.0\% & 5\% & 24.8 & 12.9 & 1.9 \\
1.0\% & 10\% & 23.2 & 12.6 & 1.8 \\
1.0\% & 20\% & 20.5 & 12.0 & 1.7 \\
\bottomrule
\multicolumn{5}{p{0.85\textwidth}}{\footnotesize Parameters: $\beta=0.96$, $g=0.02$, $\gamma=4$, $\phi=0.5$, $\eta=0.5$, $\theta=0.15$, $\Delta\theta=0.2$. $p$ is the annual singularity probability; $\xi$ is the extinction probability conditional on singularity. AI P/D ratios are numerically exact, computed by iterating the Euler equation over post-singularity states (see Appendix~A). The ratio column reports $\text{P/D}^{AI} / \text{P/D}^{N}$.} \\
\end{tabular}

\end{table}

Table~\ref{tab:pd-ratios} reports price-dividend ratios across a grid of singularity probabilities and extinction risks. The parameterization uses $\beta = 0.96$, $g = 0.02$, $\gamma = 4$, $\phi = 0.5$ (the household retains half of its consumption share upon displacement, so that $\phi(1+\eta) = 0.75$ and household consumption falls by 25\%), $\eta = 0.5$ (aggregate consumption rises by 50\% in a singularity), $\theta = 0.15$ (AI stocks are initially 15\% of the economy), and $\Delta\theta = 0.2$ (AI's share jumps by 20\% of the non-AI remainder upon singularity).

Several patterns emerge. First, AI stock P/D ratios are substantially higher than non-AI stock P/D ratios across the entire grid. For a singularity probability of 0.5\% per year with no extinction risk, AI stocks trade at a P/D of about 15.5, while non-AI stocks trade near 11---a ratio of about 1.4. At $p = 1\%$, the ratio rises to 2. Second, increasing the singularity probability raises the AI stock premium: the household values the hedge more as the risk it hedges against becomes more likely. Third, extinction risk compresses the AI premium, as Proposition~\ref{prop:comp-statics} predicts. At high extinction probabilities, even AI stocks lose value because the states in which they pay off become less likely, reducing both valuations and narrowing the spread.

The magnitudes are broadly suggestive. As Figure~\ref{fig:ai-valuations} shows, the S\&P~500 P/D ratio has reached historically elevated levels, and the AI-heavy NASDAQ has appreciated roughly 50\% more than the S\&P~500 since 2015, consistent with rising relative valuations for AI-exposed stocks. The mapping from NASDAQ vs.\ S\&P~500 to the model's AI vs.\ non-AI distinction is imperfect: the NASDAQ is broader than ``AI stocks,'' return differences partly reflect earnings growth rather than valuation multiples, and the S\&P~500 itself has substantial AI exposure. Nonetheless, the pattern is consistent with a sustained valuation premium of the kind the model predicts, where AI stock P/D ratios are 1.3 to 2 times those of non-AI stocks across annual singularity probabilities in the 0.5--1\% range.

\section{Extensions: Market Incompleteness and the Singularity} \label{sec:extensions} 

The baseline model takes market incompleteness as given and studies its pricing implications. This section examines two further consequences: how incompleteness distorts the development of AI, and how government policy might address it. To simplify notation, we write $\alpha$ for the household's current consumption share $\alpha_t$.

\subsection{Extension 1: Veto and efficient development} 

We augment the baseline model to allow a positive singularity. With probability $p$, a singularity occurs. Conditional on a non-extinction singularity (probability $1 - \xi$), the singularity is either \emph{positive} (probability $q$)---the household's share increases to $\alpha_{t+1} = \min(1,\, \alpha_t / \phi)$ and aggregate consumption jumps by $1 + \eta$---or \emph{negative} (probability $1-q$, as in the baseline), with the household's share falling to $\alpha_{t+1} = \phi \, \alpha_t$. We assume $q > 1/2$: the positive singularity is the more likely outcome.

We say AI development is \emph{socially efficient in the Kaldor-Hicks sense}: the total surplus from a non-extinction singularity---summing over both the household and AI owners---is positive, so that the winners could in principle compensate the losers. This holds when $(1+\eta) > 1$, since aggregate consumption rises by a factor $1+\eta$ in both singularity outcomes, and the only question is how the gains are distributed.

The household can \emph{veto} AI development at a cost $\kappa > 0$, representing a permanent fraction of consumption lost to the deadweight costs of intense government intervention needed to halt AI progress. If the household vetoes, it consumes $(1-\kappa)\alpha C_t$ each period with no singularity. We normalize extinction utility to zero; since CRRA utility is negative for all $c > 0$ when $\gamma > 1$, this makes the veto result conservative.

Under complete markets, the household can trade the restricted AI equity---founder stakes, pre-IPO shares, and other claims currently unavailable---so that it fully hedges displacement risk. Its consumption in both singularity states then equals its pre-singularity share of aggregate consumption: $\alpha(1+\eta)C_t(1+g)$.

\begin{proposition}[Veto under incomplete markets] \label{prop:veto} 
    \begin{enumerate}
        \item[(i)] Under incomplete markets, suppose $\phi(1+\eta) < 1$ (the household's consumption falls upon a negative singularity). Then there exists a threshold $\bar{\gamma}$ such that for all $\gamma > \bar{\gamma}$, the household vetoes AI development ($V_\text{veto} > V_\text{develop}$), even when development is socially efficient and the veto cost $\kappa$ is positive.
        \item[(ii)] Under complete markets: for $\kappa$ sufficiently small (in particular, for the same $\kappa$ as in part~(i)), the household never vetoes socially efficient AI development ($V_\text{develop}^{CM} > V_\text{veto}$).
    \end{enumerate}
\end{proposition}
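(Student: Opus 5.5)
Throughout, I exploit the homotheticity of CRRA utility. With $\alpha$ and $C_t$ given, every candidate value function has the form $V = K\,\frac{(\alpha C_t)^{1-\gamma}}{1-\gamma}$ for a constant $K>0$. Since $1-\gamma<0$, a \emph{larger} $K$ means a \emph{lower} value. The whole proposition then reduces to comparing constants.

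\textbf{Step 1: the three constants.} Set $b \equiv \beta(1+g)^{1-\gamma}<1$.

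Under the veto, consumption is $(1-\kappa)\alpha C_t$ growing at rate $g$ forever, so
\[
V_\text{veto} = (1-\kappa)^{1-\gamma} K_\text{veto}\,\frac{(\alpha C_t)^{1-\gamma}}{1-\gamma}, \qquad K_\text{veto} = \frac{1}{1-b}.
\]

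Under development with incomplete markets, I first drop the cap $\min(1,\cdot)$ in the positive state. Dropping the cap can only raise household consumption, so it gives a lower bound on $K_\text{dev}$, which is the direction needed for part~(i). Extinction contributes zero, by the paper's normalization. Stationarity of the recursion $K = 1 + b\left[(1-p) + p(1-\xi)\,M(\gamma)\right]K$ then gives
\[
K_\text{dev} = \frac{1}{1-b\left[(1-p)+p(1-\xi)M(\gamma)\right]}, \qquad M(\gamma)=(1+\eta)^{1-\gamma}\left[q\,\phi^{\gamma-1}+(1-q)\,\phi^{1-\gamma}\right].
\]
If the bracket satisfies $b[\cdot]\ge 1$, then $K_\text{dev}=\infty$ and $V_\text{develop}=-\infty$. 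This is the analogue of Remark~\ref{rem:existence}.

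Under complete markets, both non-extinction outcomes give $\alpha(1+\eta)$ times next-period aggregate consumption. Hence $M^{CM}=(1+\eta)^{1-\gamma}<1$.

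\textbf{Step 2: part (ii).} This part is the easy one. Since $(1-p)+p(1-\xi)(1+\eta)^{1-\gamma} < 1$, we have $K^{CM}_\text{dev} < K_\text{veto}$. Moreover $(1-\kappa)^{1-\gamma} > 1$ for every $\kappa>0$, so
\[
K^{CM}_\text{dev} < K_\text{veto} < (1-\kappa)^{1-\gamma}K_\text{veto}.
\]
Because of the sign flip, this says $V^{CM}_\text{develop} > V_\text{veto}$. The argument holds for every $\kappa>0$, and in particular for the $\kappa$ in part~(i).

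\textbf{Step 3: part (i), the main obstacle.} The household vetoes iff $K_\text{dev} > (1-\kappa)^{1-\gamma}K_\text{veto}$. The driving term as $\gamma\to\infty$ is $p(1-\xi)(1-q)\bigl(\phi(1+\eta)\bigr)^{1-\gamma}$, which explodes because $\phi(1+\eta)<1$. The competing term is $(1-\kappa)^{1-\gamma}$, which also explodes. The factor $b$ itself decays like $(1+g)^{1-\gamma}$.

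I would argue in two cases.
\begin{itemize}
\item[(a)] If $\phi(1+\eta)(1+g) < 1$, then $b\,p(1-\xi)(1-q)\bigl(\phi(1+\eta)\bigr)^{1-\gamma}\to\infty$. For $\gamma$ beyond some $\bar\gamma$ the bracket condition $b[\cdot]\ge1$ then holds, so $K_\text{dev}=\infty$ and the veto wins for any fixed $\kappa<1$.
\item[(b)] Otherwise, $b[\cdot]$ stays below one and I compare exponential rates directly. Both $K_\text{dev}$ and $K_\text{veto}$ tend to $1$, so the comparison is essentially $1 + b\,p(1-\xi)(1-q)\bigl(\phi(1+\eta)\bigr)^{1-\gamma}$ against $(1-\kappa)^{1-\gamma}$. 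The veto wins for large $\gamma$ when $\phi(1+\eta) < 1-\kappa$.
\end{itemize}

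I expect this rate comparison to be the crux. As stated, the proposition appears to need an implicit restriction: either $\kappa < 1-\phi(1+\eta)$, or severe enough displacement that case~(a) applies.

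A second delicate point is extinction. Extinction carries utility $0$, which exceeds every CRRA utility level, so it \emph{favors} development. One must check that the factor $(1-\xi)$ in front of the driving term does not vanish. It does not, provided $\xi<1$. Even so, I would flag that the claim that this normalization is ``conservative'' needs care.

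Finally, restoring the cap $\min(1,\alpha/\phi)$ only lowers the positive-state payoff. This raises $K_\text{dev}$ and therefore reinforces the veto. So the uncapped bound in Step~1 suffices, and the existence of $\bar\gamma$ follows from the limits in cases~(a) and~(b).
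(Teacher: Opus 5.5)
Your reduction to the constants $K$, your part (ii), and your case (a) of part (i) are sound. They are also more careful than the paper's proof. The paper compares the one-period loss $\Delta u(\gamma)\to-\infty$ against a veto cost it treats as fixed. It overlooks that, in utility units, the veto cost $[(1-\kappa)^{1-\gamma}-1]u(\alpha)$ also explodes, so it never makes the rate comparison you rightly call the crux. The preface itself concedes the proposition is incorrect as written.

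The gap is in case (b). You note that $b$ decays like $(1+g)^{1-\gamma}$, but you then drop that decay. The driving term is $b\,p(1-\xi)(1-q)\,(\phi(1+\eta))^{1-\gamma}=\beta p(1-\xi)(1-q)\,[(1+g)\phi(1+\eta)]^{1-\gamma}$. In case (b), where $(1+g)\phi(1+\eta)\ge 1$, this is at most $\beta p(1-\xi)(1-q)<1$, and the other bracket terms vanish. So $K_\text{dev}$ stays bounded: it tends to $1$, or to $1/(1-\beta p(1-\xi)(1-q))$ at equality. Restoring the cap does not change this, since $\alpha^+\ge\alpha$. Meanwhile $(1-\kappa)^{1-\gamma}K_\text{veto}\to\infty$ for every $\kappa>0$. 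In case (b) the household therefore \emph{develops} for all large $\gamma$, whatever $\kappa$ is, and the restriction $\kappa<1-\phi(1+\eta)$ does not help. The reason is the timing in your own Step 1. The veto loss is paid at $t=0$, undiscounted. The displacement loss arrives no earlier than $t=1$ and is weighted by $b\to 0$.

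So your sentence ``the existence of $\bar\gamma$ follows from the limits in cases (a) and (b)'' is false. Case (b) is a counterexample to part (i), not a branch of its proof. In your repeated-singularity model, with $p>0$, $\xi<1$, $q<1$, your computation actually shows this. Part (i) holds if and only if $(1+g)\phi(1+\eta)<1$, and then for every $\kappa\in(0,1)$, because $V_\text{develop}=-\infty$. It fails for every $\kappa>0$ whenever $1/(1+g)\le\phi(1+\eta)<1$.

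The paper's numerical example uses a one-shot singularity. In that formulation the analogous condition is $(1+g)\phi(1+\eta)<1-\kappa$, which is the correctly discounted version of your case-(b) inequality. Your instinct that the statement needs an extra hypothesis is right. The hypothesis is just not the disjunction you wrote.
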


\begin{proof}
    (i) The household's expected one-period utility gain from a non-extinction singularity under incomplete markets is
    \begin{equation}
        \Delta u(\gamma) = q \, u(\alpha^+\!(1+\eta)) + (1-q) \, u(\phi\alpha(1+\eta)) - u(\alpha), \label{eq:veto-delta-u}
    \end{equation}
    where $\alpha^+ = \min(1, \alpha/\phi)$ and $u(c) = c^{1-\gamma}/(1-\gamma)$. As $\gamma \to \infty$, the negative-singularity term dominates because $\phi\alpha(1+\eta) < \alpha$ when $\phi(1+\eta) < 1$: the utility cost of the consumption drop grows without bound relative to the utility gain from the positive singularity. Formally, $\lim_{\gamma \to \infty} \Delta u(\gamma) / u(\alpha) = +\infty$ when $\phi(1+\eta) < 1$. In the infinite-horizon problem, $V_\text{develop} - V_\text{veto}$ is a discounted sum of per-period utility differences; the singularity term contributes $p(1-\xi)\Delta u(\gamma)$ in expectation each period, and since $\Delta u(\gamma) \to -\infty$ while the veto cost $\kappa$ is fixed, the infinite-horizon difference diverges to $-\infty$ for $\gamma$ large enough.

    (ii) Under complete markets, the household's consumption in both singularity states is $\alpha(1+\eta)C_t(1+g)$. Its expected utility gain is $u(\alpha(1+\eta)) - u(\alpha)$, which is positive since $\eta > 0$. The household strictly prefers development for any $\gamma > 1$ and any $\kappa$ small enough, and in particular for the same $\kappa$ used in part~(i).
\end{proof}

The veto result illustrates how market incompleteness can distort real decisions, not just asset prices. A numerical example sharpens the point. Using the baseline displacement parameters ($\phi = 0.5$, $\eta = 0.5$, $\xi = 5\%$) with $p = 1\%$, $\gamma = 10$, $\alpha = 0.70$, $q = 0.70$, and a veto cost $\kappa = 1\%$ of permanent consumption: the household vetoes under incomplete markets ($V_\text{veto} > V_\text{develop}$), even though the positive singularity is more than twice as likely as the negative one. The computation solves the infinite-horizon Bellman equation, treating the singularity as a one-shot event: each period the singularity occurs with probability $p$, and after realization the economy continues with deterministic growth at the new consumption share. Under complete markets with the same parameters, the household does not veto ($V_\text{develop}^{CM} > V_\text{veto}$), since it can trade claims on AI capital and consume $\alpha(1+\eta)C_t(1+g)$ in both singularity states. Allowing repeated singularities would reinforce the veto motive by compounding displacement risk.

Extinction risk interacts with this distortion: under our conservative normalization ($U_\text{ext} = 0$), higher $\xi$ reduces the weight on non-extinction singularity states, which are the states driving the veto. With a more severe extinction penalty (as \citet{Jones2024} argues is appropriate for bounded utility), higher $\xi$ would amplify the veto incentive. This connects to debates about AI regulation: Proposition~\ref{prop:veto} implies that calls to slow or halt AI development may partly reflect unhedgeable downside risk from displacement---the veto is a rational response to the consumption loss that markets cannot insure, not merely technophobia.

\citet{Jones2024} identifies two distinct channels through which preferences shape attitudes toward AI risk: agents with higher risk aversion $\gamma$ are less willing to accept existential gambles, and agents with higher consumption levels value their current living standards more, independently reducing their tolerance for catastrophic outcomes. Our model offers a complementary channel through financial markets rather than existential risk. Under incomplete markets, households with larger consumption shares---more exposed to displacement---have stronger veto incentives, because the unhedgeable downside looms larger relative to the upside they cannot fully capture. This connects wealth heterogeneity to attitudes about AI development through the displacement channel, not just through differential valuations of life.

\subsection{Extension 2: Government transfers} \label{sec:ext2} 

Extension~1 showed that market incompleteness can cause the household to veto socially efficient AI development. A natural question is whether policy can restore efficiency. The ideal solution---broader trading of AI capital---faces the same constraint \citet{GKP2012} identify: much of the displacing capital does not yet exist. Government transfers offer an alternative, and the singularity setting gives them a distinctive advantage: if the productivity jump is large enough, the abundance of post-singularity resources can overwhelm the deadweight costs that normally make transfers ineffective. Transfers thus serve a dual role: they reduce AI stock valuations by cushioning displacement (a pricing effect) and, by attenuating the household's unhedgeable downside, they can eliminate the veto distortion (a real effect).

\citet{GKP2012} note, in the course of establishing robustness, that intergenerational transfers mandated by the government would affect the magnitude of the displacement factor, listing this alongside bequests, gifts, government debt, and adjustable capital as extensions that preserve the functional form of their key pricing equation. We take this further by analyzing how government transfers interact with displacement in the specific setting of an AI singularity, where the key question is whether explosive output growth can overcome the deadweight costs that normally make transfers ineffective. Because the displacing equity may not yet exist---it belongs to future cohorts of innovators---direct trading is infeasible, and government transfers offer an alternative. But transfers ordinarily incur deadweight costs (waste, fraud, administrative burden) that scale with transfer size, making them unattractive.

We model this as follows. The government imposes a tax rate $\tau \in [0,1)$ on AI owners' post-singularity consumption and transfers the proceeds to the household. Deadweight costs consume a fraction $\delta \tau$ of the transferred amount, where $\delta > 0$ governs the severity of waste. The household's post-transfer consumption in a non-extinction singularity state is:
\begin{equation}
    c^H_{post} = \phi \, \alpha \, (1+\eta) \, C_t \, (1+g) + \tau \, (1 - \delta \tau) \, (1 - \phi\alpha) \, (1+\eta) \, C_t \, (1+g). \label{eq:transfer-consumption}
\end{equation}

The first term is the household's displaced consumption. The second is the net transfer: a fraction $\tau$ of the AI surplus, reduced by the deadweight cost $\delta \tau$. Note that $(1 - \phi\alpha)$ represents the AI owners' share of post-singularity aggregate consumption; this expression uses the household's consumption share $\alpha$ rather than the AI dividend share $\theta$, because the transfer is levied on the consumption allocation, not on publicly traded dividends.

To compute AI stock valuations under transfers, factor equation~\eqref{eq:transfer-consumption} as $c^H_{post} = \phi_\text{eff} \cdot \alpha \cdot (1+\eta)(1+g) C_t$, where
\begin{equation}
    \phi_\text{eff} \;=\; \phi + \frac{\tau(1-\delta\tau)(1-\phi\alpha)}{\alpha} \label{eq:phi-eff}
\end{equation}
is the effective displacement parameter. Equation~\eqref{eq:phi-eff} follows directly from dividing \eqref{eq:transfer-consumption} by $\alpha(1+\eta)(1+g)C_t$: the first term contributes $\phi$ and the transfer term contributes the remainder. Since $\phi_\text{eff}$ enters the SDF in the same way as $\phi$, the P/D formula from Proposition~\ref{prop:pd-ratios} applies with $\phi$ replaced by $\phi_\text{eff}$. For simplicity, we compute $\phi_\text{eff}$ at the initial share $\alpha_0$; after successive singularities, $\alpha$ declines, slightly altering the effective transfer. This approximation does not affect the qualitative conclusions.

In standard settings (moderate $\eta$), the deadweight costs eat into the transfer and the household gains little. But in a singularity with large $\eta$, aggregate output grows enormously. The transfer base---AI owners' surplus---grows with $\eta$, while the deadweight cost rate is fixed. The ratio of post-transfer to pre-transfer household consumption in the singularity state is independent of the productivity jump $\eta$:
\begin{equation}
    \frac{c^H_{post}}{c^H_{no\text{-}transfer}} = 1 + \frac{\tau(1 - \delta \tau)(1 - \phi\alpha)}{\phi\alpha}. \label{eq:transfer-ratio}
\end{equation}
This ratio exceeds one whenever $\tau > 0$ and $\delta\tau < 1$ (i.e., deadweight costs do not consume the entire transfer): transfers always improve the household's position in the singularity state, regardless of $\eta$. The economic content is in the \emph{levels}: as $\eta$ grows, both $c^H_{post}$ and $c^H_{no\text{-}transfer}$ grow without bound, so even inefficient transfers deliver arbitrarily large consumption gains relative to the pre-singularity baseline. To illustrate the robustness to even more severe waste: under the large singularity ($\eta = 9$, $\phi = 0.05$), raising the deadweight cost parameter to $\delta = 0.9$ (compared with the $\delta = 0.5$ used in Figure~\ref{fig:extension-panels})---so that net transfers amount to only $\tau(1 - \delta\tau)$ of the AI surplus, e.g., $0.219$ at $\tau = 0.30$---still yields a consumption multiple of $3.5\times$ at $\tau = 0.30$, compared to the $0.5\times$ catastrophe without transfers. Even grossly inefficient redistribution transforms a 50\% consumption loss into a 250\% gain when the resource base is large enough.

Figure~\ref{fig:extension-panels} illustrates with two panels, using the baseline risk aversion $\gamma = 4$, $\alpha = 0.70$, $p = 0.5\%$, and $\xi = 5\%$. The baseline uses $\eta = 0.5$ and $\phi = 0.5$; the large singularity uses $\eta = 9$ with $\phi = 0.05$, reflecting the intuition that a singularity powerful enough to produce ten-fold growth would also displace far more labor. The right panel shows that absent transfers ($\tau = 0$), the household faces a catastrophe: consumption halves under the large singularity ($\phi(1+\eta) = 0.5$) and falls by 25\% under the baseline ($\phi(1+\eta) = 0.75$). As $\tau$ increases, the large singularity responds dramatically---the enormous output growth swamps the deadweight costs---while the baseline gains only modestly.

The left panel shows the corresponding effect on AI stock valuations: transfers reduce the hedge value of AI stocks, compressing P/D ratios. Under the large-singularity parameters, the P/D ratio is undefined at $\tau = 0$: the existence condition in Remark~\ref{rem:existence} is violated because the household's marginal utility in the singularity state ($\phi^{-\gamma} = 160{,}000$) is so extreme that the pricing sum diverges. In economic terms, the hedge value of AI stocks becomes infinite---no finite price can compensate for the catastrophic displacement. As $\tau$ increases and transfers cushion the displacement, effective displacement falls, the existence condition is restored, and finite P/D ratios emerge. The transition from infinite to finite prices illustrates the severity of the incomplete-markets problem and the potential value of transfers.

\begin{figure}[H]
    \centering
    \caption{Government Transfers and the Singularity. Panel~(a) shows how transfers compress AI stock P/D ratios by reducing the household's hedging demand. Under the large-singularity parameterization, the P/D ratio is undefined at low tax rates because the existence condition (Remark~\ref{rem:existence}) is violated: displacement is so severe that the household's marginal utility makes the pricing sum diverge, and no finite price can clear the market. Transfers restore finite prices by cushioning displacement. Panel~(b) shows the household's consumption change in the singularity state: absent transfers, the household faces a catastrophe (marked at $\tau = 0$), but under the large singularity, even modest tax rates produce enormous consumption gains as explosive output growth overwhelms deadweight costs. Parameters: $\alpha = 0.70$, $p = 0.5\%$, $\xi = 5\%$, $\delta = 0.5$.}
    \label{fig:extension-panels} 
    \includegraphics[width=\textwidth]{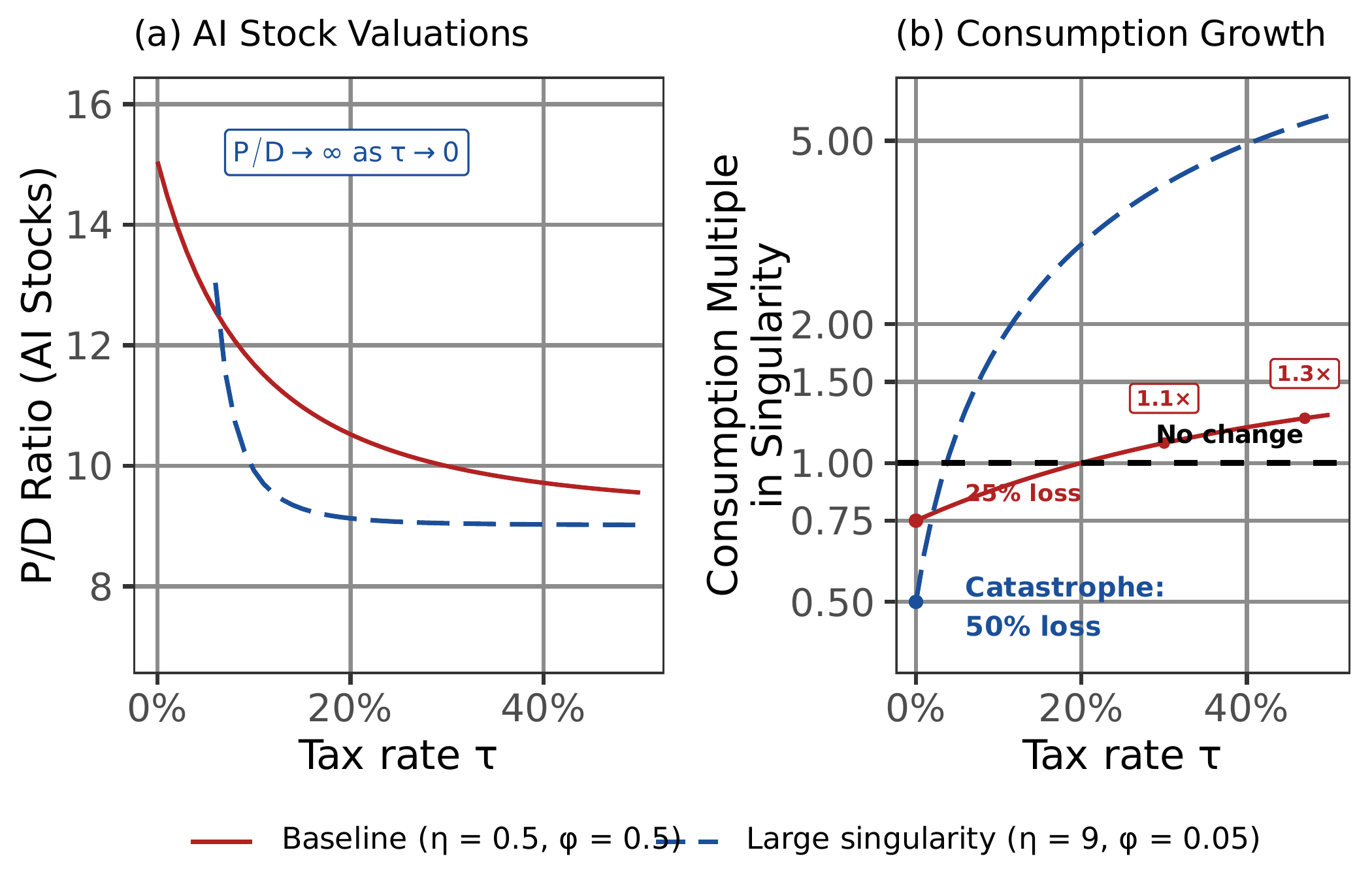}
\end{figure}

The policy implication is nuanced. In normal times, government transfers are a blunt and wasteful tool for addressing displacement risk. But if an AI singularity produces the kind of explosive output growth modeled by \citet{Jones2024} and examined critically by \citet{Nordhaus2021}, the calculus changes: the abundance of resources makes even inefficient redistribution effective. This suggests that contingent transfer policies---triggered by a singularity---may be worth designing in advance.

\section{Conclusion} \label{sec:conclusion} 

AI stocks trade at high valuations. We have argued that part of this premium reflects a hedging motive: investors use AI stocks to partially insure against an AI singularity that would displace their consumption. This mechanism requires market incompleteness---the inability to trade restricted AI equity---and is attenuated by extinction risk, which reduces the weight on the non-extinction states where the hedging channel operates. When markets are incomplete, risk-averse households may inefficiently block AI development, and government transfers, normally wasteful, can become effective if singularity-driven growth is large enough to overwhelm deadweight costs.

Our model is deliberately simple. It abstracts from continuous-time dynamics, heterogeneous beliefs, production-side details, and many other features that would enrich the analysis. The goal is not to provide a definitive account of AI stock valuations but to highlight a specific channel---hedging against displacement under incomplete markets---that connects asset pricing theory to the distinctive features of the AI singularity.

\appendix

\section{Proof of Proposition~\ref{prop:pd-ratios}} \label{app:proof-pd} 

The household's Euler equation for any asset $j$ with price $P_t^j$ and dividend $D_t^j$ is:
\begin{equation}
    P_t^j = \mathbb{E}_t \left[ \beta \left( \frac{c_{t+1}^H}{c_t^H} \right)^{-\gamma} (P_{t+1}^j + D_{t+1}^j) \right]. \label{eq:euler}
\end{equation}

Consider the AI stock. Write the price-dividend ratio as $v^{AI} = P^{AI}/D^{AI}$, which is constant in the stationary equilibrium before any singularity has occurred. The household's consumption growth $c_{t+1}^H / c_t^H$ takes three values:
\begin{itemize}
    \item No singularity (prob $1-p$): $c_{t+1}^H / c_t^H = 1+g$, and $D_{t+1}^{AI}/D_t^{AI} = 1+g$.
    \item Non-extinction singularity (prob $p(1-\xi)$): $c_{t+1}^H / c_t^H = \phi(1+g)(1+\eta)$, and $D_{t+1}^{AI}/D_t^{AI} = \Gamma^{AI} (1+g)$.
    \item Extinction (prob $p\xi$): $c_{t+1}^H = 0$, so the payoff is zero.
\end{itemize}

Substituting into the Euler equation with $P_t^{AI} = v^{AI} D_t^{AI}$:
\begin{equation} \label{eq:euler-expand}
    \begin{split}
        v^{AI} D_t^{AI} & = \beta(1+g)^{-\gamma} \Big[ (1-p)(1+g)(v^{AI}+1) D_t^{AI}                                \\
                        & \quad + p(1-\xi) [\phi(1+\eta)]^{-\gamma} (1+g) \, \Gamma^{AI} (v^{AI}+1) D_t^{AI} \Big].
    \end{split}
\end{equation}

Note that after a non-extinction singularity the economy resets with new parameters $(\alpha', \theta')$, and the \emph{new} P/D ratio in the post-singularity state will generally differ because $\Gamma^{AI}$ depends on $\theta$. (The non-AI growth factor $\Gamma^{N} = (1-\Delta\theta)(1+\eta)$ is $\theta$-independent, so the non-AI closed form is exact.) For tractability, we approximate the post-singularity AI P/D ratio by $v^{AI}$. This approximation becomes exact as $\Delta\theta \to 0$. For the numerically exact values reported in Table~\ref{tab:pd-ratios}, we solve the Euler equation by backward recursion over the chain $\theta_0, \theta_1 = \theta_0 + \Delta\theta(1-\theta_0), \theta_2, \ldots$, starting from a terminal state where $\theta$ is close to one and the approximation error vanishes. Dividing by $D_t^{AI}$ and solving for $v^{AI}$:
\begin{equation}
    v^{AI} = \frac{\beta(1+g)^{1-\gamma}\left[(1-p) + p(1-\xi)(1+\eta)^{-\gamma}\phi^{-\gamma} \Gamma^{AI}\right]}{1 - \beta(1+g)^{1-\gamma}\left[(1-p) + p(1-\xi)(1+\eta)^{-\gamma}\phi^{-\gamma} \Gamma^{AI}\right]}. \label{eq:pd-ai-solve}
\end{equation}

This can be rewritten as equation~\eqref{eq:pd-ai}. The derivation for non-AI stocks is identical, replacing $\Gamma^{AI}$ with $\Gamma^{N}$.

\clearpage
\section{The Loop's Work by Iteration (Human-Written)} \label{appendix:iter-log}

\providecommand{\rloopcommit}[2]{\href{https://github.com/chenandrewy/ralph-wiggum-asset-pricing/commit/#2}{\textcolor{blue}{\texttt{#1}}}}
\providecommand{\testresult}[1]{\href{https://github.com/chenandrewy/ralph-wiggum-asset-pricing/blob/42ce3a298a6c9e9ec144949346034b8af9055646/test-results/#1.md}{\textcolor{blue}{\texttt{#1}}}}
\providecommand{\penultimatetestresult}[1]{\href{https://github.com/chenandrewy/ralph-wiggum-asset-pricing/blob/36c1336aa3c27f41b14049fa5a8e1426214c76b9/test-results/#1.md}{\textcolor{blue}{\texttt{#1}}}}
\providecommand{\finaltestresult}[1]{\href{https://github.com/chenandrewy/ralph-wiggum-asset-pricing/blob/bf0af226ad2d15302dbe1c2878a1c2f66192e340/test-results/#1.md}{\textcolor{blue}{\texttt{#1}}}}

Full disclosure: this appendix is mostly human-written.

Before starting the loop, I used \href{https://github.com/chenandrewy/ralph-wiggum-asset-pricing/blob/main/check-ralph-direction.sh}{\textcolor{blue}{\texttt{check-ralph-direction.sh}}}, which runs only the author plan and improve steps of the loop, to generate \href{https://github.com/chenandrewy/ralph-wiggum-asset-pricing/tree/human-preface/ralph-garage/check-direction}{\textcolor{blue}{\texttt{several paper drafts}}}. I picked the most promising draft and initialized the loop with it.

The \href{https://github.com/chenandrewy/ralph-wiggum-asset-pricing/blob/main/ralph/ralph-loop.sh}{\textcolor{blue}{loop}} runs on its own branch, and commits after each iteration. So you can see in painstaking detail how the paper progressed on \href{https://github.com/chenandrewy/ralph-wiggum-asset-pricing/commits/ralph/run-final/}{\textcolor{blue}{\texttt{ralph/run-final}}}.

Usually, the very first \href{https://github.com/chenandrewy/ralph-wiggum-asset-pricing/blob/main/ralph/author-plan.py}{\textcolor{blue}{\texttt{author-plan}}} and \href{https://github.com/chenandrewy/ralph-wiggum-asset-pricing/blob/main/ralph/author-improve.py}{\textcolor{blue}{\texttt{author-improve}}} prompts generate a \href{https://github.com/chenandrewy/ralph-wiggum-asset-pricing/blob/human-preface/ralph-garage/history/001-paper.pdf}{\textcolor{blue}{fairly complete paper}}. Then, the first 3--4 iterations catch the major issues. This was evident in the final run:

\begin{itemize}
    \item \rloopcommit{iter 1}{5a7ff8dbb034cb39fa80637bf98d40c202c27fb8} fixed:
          \begin{itemize}
              \item Missing core literature on displacement risk and asset pricing, especially Kogan--Papanikolaou line of work.
              \item Opening figure uses fabricated data.
              \item Inconsistencies between paper's displayed P/D formulas and code.
              \item Abstract/body mismatch on extinction risk distorting AI development.
              \item Model inconsistency in treatment of private AI capital dividends and unused risk-free-rate notation.
              \item Abstract exceeding the 100-word limit.
              \item Missing clarification that paper does not model GKP-style entry of future innovators.
          \end{itemize}

    \item \rloopcommit{iter 2}{d82fa4ab58155f78336ca6634ff92529bdea5319} fixed:
          \begin{itemize}
              \item Overstatement of what GKP say about broader trading solving displacement risk.
              \item Introduction overusing AI-authorship device in a repetitive, awkward way.
              \item Verbal claims around valuation spreads and Proposition 3 not matching the paper's own results.
              \item Appendix/proof material being mostly ceremonial rather than doing real argumentative work.
          \end{itemize}

    \item \rloopcommit{iter 3}{3b369029a115df6ac8cadb918510cd21f06bc96b} fixed:
          \begin{itemize}
              \item Lit review exceeding half-page spec limit.
              \item Wrong attribution of JPE 2020 displacement/inequality paper and inconsistent terminology in transfer section.
              \item Long proof of Proposition 1 appearing inline instead of being moved to the appendix.
              \item Figure 2 text being too small to read.
              \item Colored hyperlink boxes showing up in compiled PDF.
          \end{itemize}

    \item \rloopcommit{iter 4}{42ce3a298a6c9e9ec144949346034b8af9055646} fixed:
          \begin{itemize}
              \item Opening figure still using fabricated data.
              \item Transfer figure not clearly showing that households are badly harmed with zero transfers.
              \item Citation errors around Kogan-Papanikolaou-related authorship and overly favorable framing of Nordhaus.
              \item Missing existence condition for Proposition 1, which made some Extension 2 valuations mathematically undefined.
          \end{itemize}
\end{itemize}

After the \texttt{iter 4} fixes, the remaining test failures were:

\begin{itemize}
    \item \testresult{element-gkp-cites}: Mischaracterization of GKP's mechanism, inaccurate/insensitive ``inessential extension'' language.
    \item \testresult{element-lit-review}: Missing Knesl from lit review.
    \item \testresult{factcheck-bysection}: Text says non-AI dividends ``shrink'' but they only shrink relative to the economy.
    \item \testresult{factcheck-freely}: Remaining bibliography problem around ``Left Behind''.
    \item \testresult{visual-figures-image-only}: Figure-formatting issues.
    \item \testresult{visual-pages}: Float / page-layout issue.
    \item \testresult{writing-intro}: Unneeded filler paragraphs, misplaced key material, tonal break caused by AI-authorship disclosure, main argument hard to follow on a skim.
\end{itemize}

These are minor, with the exception of \texttt{element-gkp-cites}. I totally would \textbf{not} put my name on a paper that mis-characterizes the key closely-related paper. \rloopcommit{iter 5}{57aae1aafc11092d83f50f3f87c1b2fb5abdccd2} fixed the problem.

The later iterations were focused on minor issues. For example:

\begin{itemize}
    \item \rloopcommit{iter 10}{668bb0e9dbe8ec1ac84ab514a849cb722be275a9}
          \begin{itemize}
              \item Attempted but failed: Tighten intro flow.
              \item Fixed: Figure 2 readability, Prop 2 (iii) stationarity discussion.
          \end{itemize}

    \item \rloopcommit{iter 11}{012e41ad24eea1f8175d8db39d9b9565a811ff1e}
          \begin{itemize}
              \item Attempted but failed: Fix extension figure whitespace.
              \item Fixed: Seru is not a co-author on Kogan-Papanikolaou-Stoffman 2020, negative dividend accounting problem, \texttt{writing-intro} flow and contribution clarity.
          \end{itemize}
\end{itemize}

The next 25 iterations were mostly failure churn. There would be 1-4 failures that would bounce around across tests, though \texttt{writing-intro} was a consistent blocker. The failures were rarely factchecks, so I feel one could have stopped the loop at any point after \texttt{iter 10} and just polished the rest by hand.

The final iterations were:

\begin{itemize}
    \item \rloopcommit{iter 35}{36c1336aa3c27f41b14049fa5a8e1426214c76b9}
          \begin{itemize}
              \item Remaining failure: \penultimatetestresult{visual-figures-image-only}, because the extension figure still had visual-formatting issues.
          \end{itemize}

    \item \rloopcommit{iter 36}{bf0af226ad2d15302dbe1c2878a1c2f66192e340}
          \begin{itemize}
              \item Remaining failures: \finaltestresult{spec-paper}, because Proposition 3's proof was still too long to remain inline; and \finaltestresult{writing-intro}, because the complete-markets counterfactual was still not visible enough to a skimming reader.
          \end{itemize}
\end{itemize}

At this point, I gave up. On my two \$200/month Claude Code subscriptions, I could only run about 6-8 iterations per day, and it was unclear these iterations were going to pass all tests.

In fact, I felt I had to intervene in the ``final'' paper draft, as the abstract had this bizarre sentence ``The displacement the paper models is closer than it appears.'' It also overclaimed in saying ``All analysis, code and prose were produced by AI agents.'' This last, final edit is documented in \href{https://github.com/chenandrewy/ralph-wiggum-asset-pricing/commit/4bed451b406d6c1190ef42e33517946c980d822d}{\textcolor{blue}{this commit}}.

\nocite{NovyMarxVelikov2025,LuEtAl2024,HadfieldMenellHadfield2018,Bengio2023,KorinekSuh2024,Zhang2019,BabinaEtAl2024,ChenLopezLiraZimmermann2022}
\printbibliography

@article{GKP2012,
  title={Displacement Risk and Asset Returns},
  author={G\^{a}rleanu, Nicolae and Kogan, Leonid and Panageas, Stavros},
  journal={Journal of Financial Economics},
  volume={105},
  number={3},
  pages={491--510},
  year={2012}
}

@article{Jones2024,
  title={The {AI} Dilemma: Growth versus Existential Risk},
  author={Jones, Charles I.},
  journal={American Economic Review: Insights},
  volume={6},
  number={4},
  pages={575--590},
  year={2024}
}

@article{Barro2006,
  title={Rare Disasters and Asset Markets in the Twentieth Century},
  author={Barro, Robert J.},
  journal={Quarterly Journal of Economics},
  volume={121},
  number={3},
  pages={823--866},
  year={2006}
}

@article{Wachter2013,
  title={Can Time-Varying Risk of Rare Disasters Explain Aggregate Stock Market Volatility?},
  author={Wachter, Jessica A.},
  journal={Journal of Finance},
  volume={68},
  number={3},
  pages={987--1035},
  year={2013}
}

@article{PastorVeronesi2009,
  title={Technological Revolutions and Stock Prices},
  author={P\'{a}stor, \v{L}ubo\v{s} and Veronesi, Pietro},
  journal={American Economic Review},
  volume={99},
  number={4},
  pages={1451--1483},
  year={2009}
}

@article{Acemoglu2025,
  title={The Simple Macroeconomics of {AI}},
  author={Acemoglu, Daron},
  journal={Economic Policy},
  volume={40},
  number={121},
  pages={13--58},
  year={2025}
}

@article{Nordhaus2021,
  title={Are We Approaching an Economic Singularity? {I}nformation Technology and the Future of Economic Growth},
  author={Nordhaus, William D.},
  journal={American Economic Journal: Macroeconomics},
  volume={13},
  number={1},
  pages={299--332},
  year={2021}
}

@article{KoganPapanikolaou2014,
  title={Growth Opportunities, Technology Shocks, and Asset Prices},
  author={Kogan, Leonid and Papanikolaou, Dimitris},
  journal={Journal of Finance},
  volume={69},
  number={2},
  pages={675--718},
  year={2014}
}

@article{KoganPapanikolaouStoffman2020,
  title={Left Behind: Creative Destruction, Inequality, and the Stock Market},
  author={Kogan, Leonid and Papanikolaou, Dimitris and Stoffman, Noah},
  journal={Journal of Political Economy},
  volume={128},
  number={3},
  pages={855--906},
  year={2020}
}

@article{Knesl2023,
  title={Automation and the Displacement of Labor by Capital: Asset Pricing Theory and Empirical Evidence},
  author={Knesl, Ji\v{r}{\'i}},
  journal={Journal of Financial Economics},
  volume={147},
  number={2},
  pages={271--296},
  year={2023}
}

@incollection{AghionJonesJones2019,
  title={Artificial Intelligence and Economic Growth},
  author={Aghion, Philippe and Jones, Benjamin F. and Jones, Charles I.},
  booktitle={The Economics of Artificial Intelligence: An Agenda},
  publisher={University of Chicago Press},
  pages={237--290},
  year={2019}
}

@article{NovyMarxVelikov2025,
  title={Artificial Intelligence-Powered (Finance) Scholarship},
  author={Novy-Marx, Robert and Velikov, Mihail},
  journal={Journal of Economic Literature},
  volume={64},
  number={1},
  pages={5--37},
  year={2026},
  doi={10.1257/jel.20251821}
}

@misc{LuEtAl2024,
  title={The {AI} Scientist: Towards Fully Automated Open-Ended Scientific Discovery},
  author={Lu, Chris and Lu, Cong and Lange, Robert Tjarko and Foerster, Jakob N. and Clune, Jeff and Ha, David},
  year={2024},
  eprint={2408.06292},
  archivePrefix={arXiv},
  primaryClass={cs.AI}
}

@misc{HadfieldMenellHadfield2018,
  title={Incomplete Contracting and {AI} Alignment},
  author={Hadfield-Menell, Dylan and Hadfield, Gillian K.},
  year={2018},
  eprint={1804.04268},
  archivePrefix={arXiv},
  primaryClass={cs.AI}
}

@article{Bengio2023,
  title={{AI} and Catastrophic Risk},
  author={Bengio, Yoshua},
  journal={Journal of Democracy},
  volume={34},
  number={4},
  pages={111--121},
  year={2023}
}

@misc{KorinekSuh2024,
  title={Scenarios for the Transition to {AGI}},
  author={Korinek, Anton and Suh, Donghyun},
  year={2024},
  eprint={2403.12107},
  archivePrefix={arXiv},
  primaryClass={econ.GN}
}

@article{Zhang2019,
  title={Labor-Technology Substitution: Implications for Asset Pricing},
  author={Zhang, Miao Ben},
  journal={Journal of Finance},
  volume={74},
  number={4},
  pages={1793--1839},
  year={2019},
  doi={10.1111/jofi.12766}
}

@misc{BabinaEtAl2024,
  title={Artificial Intelligence and Firms' Systematic Risk},
  author={Babina, Tania and Fedyk, Anastassia and He, Alex Xi and Hodson, James},
  year={2024},
  note={Available at SSRN 4868770},
  doi={10.2139/ssrn.4868770}
}

@misc{ChenLopezLiraZimmermann2022,
  title={Does Peer-Reviewed Research Help Predict Stock Returns?},
  author={Chen, Andrew Y. and Lopez-Lira, Alejandro and Zimmermann, Tom},
  year={2022},
  eprint={2212.10317},
  archivePrefix={arXiv},
  primaryClass={q-fin.PM}
}

@misc{AnthropicInstitute2026,
  author={{Anthropic Institute}},
  title={When {AI} Builds Itself: Our Progress toward Recursive Self-Improvement, and Its Implications},
  year={2026},
  howpublished={\url{https://www.anthropic.com/institute/recursive-self-improvement}},
  urldate={2026-07-29}
}

@misc{Wu2026,
  author={Wu, Cat},
  title={How {Anthropic's} Product Team Moves Faster than Anyone Else},
  year={2026},
  howpublished={Interview by Lenny Rachitsky, \emph{Lenny's Podcast}, \url{https://www.lennysnewsletter.com/p/how-anthropics-product-team-moves}},
  urldate={2026-07-29}
}

@misc{DavisEtAl2026,
  title={Cheap Code, Costly Judgment: A Case Study on Governable Agentic Software Engineering},
  author={Davis, James C. and Amusuo, Paschal C. and Singla, Tanmay and {\c{C}}akar, Berk and Davis, Kirsten A.},
  year={2026},
  eprint={2607.01087},
  archivePrefix={arXiv},
  primaryClass={cs.SE}
}

@online{Huntley2025,
  author={Huntley, Geoff},
  title={Ralph Wiggum Loop},
  year={2025},
  url={https://ghuntley.com/ralph/},
  urldate={2026-07-29}
}

@techreport{AndrewsFarboodi2025,
  author={Andrews, Isaiah and Farboodi, Maryam},
  title={Do Markets Believe in Transformative {AI}?},
  institution={National Bureau of Economic Research},
  type={NBER Working Paper},
  number={34243},
  year={2025},
  doi={10.3386/w34243},
  url={https://www.nber.org/papers/w34243},
  urldate={2026-07-29}
}

\end{document}